\newcommand{\submittedtext}{%
\footnotesize This work has been submitted to the IEEE for possible publication. Copyright may be transferred without notice, after which this version may no longer be accessible.}
\newcommand{\submittednotice}{%
\begin{tikzpicture}[remember picture,overlay]
\node[anchor=south,yshift=10pt] at (current page.south) {%
\fbox{\parbox{\dimexpr0.65\textwidth-\fboxsep-\fboxrule\relax}{\submittedtext}}};
\end{tikzpicture}%
}
\DeclareMathOperator{\blkdiag}{blkdiag}
\newtheorem{theorem}{Theorem}
\newtheorem{proposition}{Proposition}
\title{\LARGE \bf
Regularized Distributed MPC for UAV Networks: Stabilizing Coupled Motion and Hybrid Beam Alignment
}
\author{Evangelos Vlachos, \IEEEmembership{Member, IEEE}%
\thanks{Evangelos Vlachos is with the Industrial Systems Institute, ATHENA Research Center, Patras 26504, Greece (email: evlachos@athenarc.gr).}%
}
\begin{document}

\maketitle
\submittednotice
\thispagestyle{empty}
\pagestyle{empty}

\begin{abstract}
This letter investigates the coupled control problem in UAV networks utilizing high-frequency hybrid beamsteering. While phased arrays enable rapid electronic scanning, their finite Field of View (FoV) imposes a fundamental constraint that necessitates active mechanical steering of the airframe to maintain connectivity. We propose a decentralized Model Predictive Control (MPC) framework that jointly optimizes trajectory and heading to maximize network sum-capacity subject to safety constraints. Addressing the numerical instability caused by fast-fading channel nulls, we introduce a regularized surrogate cost function based on discrete spatial smoothing. We analytically prove that this approximation bounds the cost curvature, restoring the Lipschitz continuity of the gradient. Crucially, we derive a sufficient condition linking this Lipschitz constant to the controller gain, guaranteeing the contraction and linear convergence of the distributed best-response dynamics. Simulation results demonstrate that the proposed algorithm effectively navigates the trade-off between electronic beam tracking and kinematic safety, significantly systematically outperforming velocity-aligned baselines.
\end{abstract}

\begin{IEEEkeywords}
Model Predictive Control, Unmanned Aerial Vehicles, Millimeter-Wave Communication, Distributed Control, Beam Steering.
\end{IEEEkeywords}

\section{Introduction}
\label{sec:intro}

The deployment of data-intensive UAV swarms necessitates the use of high-frequency (e.g., millimeter-wave and sub-THz bands) to support multi-gigabit throughput~\cite{swindlehurst2014millimeter}. However, the high-gain directional antennas required to overcome severe path loss in these bands introduce a fundamental coupling between the agent's kinematic state (position and orientation) and the communication link quality~\cite{xiao2022survey}. Existing research typically decouples these problems. Classic formation control often assumes ideal, omnidirectional connectivity~\cite{zhang2019research}, while communication-centric placement algorithms treat agent dynamics as quasi-static constraints~\cite{li2019joint}. This separation persists despite recent surveys highlighting the critical need for integrated control-communication protocols in autonomous aerial systems~\cite{javaid2023communication}. Although Model Predictive Control (MPC) has been successfully applied to UAV collision avoidance~\cite{gu2021distributed}, standard formulations ignore the directional Field of View (FoV) of high-frequency links. In reality, narrow beamwidths create a highly non-convex optimization landscape where minor yaw or position deviations can precipitate sudden link outages. Recent joint analyses have demonstrated that this coupling directly impacts formation stability~\cite{qian2021formation} and that co-designing mobility with communication resources significantly enhances system robustness~\cite{zhou2021joint}. 

\subsubsection*{Contributions}
To address the coupling between mechanical steering and trajectory generation, we propose a decentralized MPC framework where agents actively maneuver to maintain optimal beam alignment while navigating reference paths. Our specific contributions are:

\begin{enumerate}
    \item \textit{Joint ``Fly-and-Communicate'' Framework:} We formulate a decentralized MPC scheme that jointly optimizes trajectory and mechanical heading. Unlike standard approaches that rely on velocity alignment, this formulation explicitly manages the finite Field-of-View (FoV) constraint, treating the network maximization problem as a potential game solved via Block Coordinate Descent.

    \item \textit{Theoretical Stability Analysis:} We provide a rigorous convergence guarantee for the proposed distributed solver. We prove that discrete spatial smoothing restores the Lipschitz continuity of the objective gradient (Theorem 1) and derive a closed-form contraction condition \eqref{eq:contraction_condition} that explicitly links the smoothing radius $\epsilon$ to the control system parameters to ensure linear convergence.

    \item \textit{Validation of Active Steering:} Through high-fidelity simulations, we demonstrate that the proposed coupled strategy significantly outperforms velocity-aligned baselines. We identify the specific failure modes of standard guidance during dynamic crossings and show that active mechanical yaw control is mandatory for maintaining robust connectivity in dense swarms.
\end{enumerate}

\textit{Notation:} Scalars are denoted by italic letters (e.g., $x$), vectors by bold lowercase letters (e.g., $\mathbf{x}$), and matrices by bold uppercase letters (e.g., $\mathbf{A}$). The set of real numbers is $\mathbb{R}$. For a vector $\mathbf{x}$, $\|\mathbf{x}\|$ denotes the Euclidean norm ($L_2$), and $\|\mathbf{x}\|_{\mathbf{Q}}^2 \triangleq \mathbf{x}^\top \mathbf{Q} \mathbf{x}$ denotes the weighted squared norm. The operator $\blkdiag(\cdot)$ constructs a block-diagonal matrix. The set $\mathcal{N}_i$ represents the neighbors of agent $i$.

\section{Problem Formulation}
\label{sec:problem}
\subsection{Dynamics and Constraints}

We consider a swarm of $N$ autonomous agents indexed by $i \in \{1, \dots, N\}$. The system is modeled in discrete time indexed by $k$ with sampling interval $T_s$. Let $\mathbf{x}_i(k) = [\mathbf{p}_i^\top, \mathbf{v}_i^\top, \psi_i]^\top \in \mathbb{R}^7$ and $\mathbf{u}_i(k) = [\mathbf{a}_i^\top, \omega_i]^\top \in \mathbb{R}^4$ denote the state and control input vectors, respectively, where $\mathbf{p}_i$ and $\mathbf{v}_i$ are the position and velocity in $\mathbb{R}^3$, $\mathbf{a}_i$ is the acceleration input, $\omega_i$ is the yaw rate input, and $\psi_i$ is the yaw angle (heading) of agent $i$. The discrete-time evolution of each agent is governed by the linear model:
\begin{equation}
    \mathbf{x}_i(k+1) = \mathbf{A} \mathbf{x}_i(k) + \mathbf{B} \mathbf{u}_i(k),
    \label{eq:dynamics}
\end{equation}
where $\mathbf{A}$ and $\mathbf{B}$ represent the standard decoupled double-integrator dynamics with sampling time $T_s$ \cite{franklin1998digital}. The system is subject to actuation limits and pairwise collision avoidance constraints defined as:
\begin{subequations}
\begin{align}
    & \|\mathbf{a}_i(k)\|_\infty \leq a_{\max}, \quad |\omega_i(k)| \leq \omega_{\max}, \label{eq:act_constraints} \\
    & \|\mathbf{p}_i(k) - \mathbf{p}_j(k)\| \geq d_{\min}, \quad \forall j \neq i. \label{eq:safe_constraints}
\end{align}
\end{subequations}

\subsection{Effective Channel and Beamforming Gains}
Agents communicate via directional links and operate in a propagation environment which is strongly characterized by ground reflections. We model this channel using a Two-Ray Ground Reflection model~\cite{rappaport2014millimeter}.
The complex baseband channel gain between agent $i$ and $j$ is given by:
\begin{equation}
    h_{ij} = \frac{1}{d_{\text{LoS}}} e^{-j \frac{2\pi}{\lambda} d_{\text{LoS}}} + \Gamma(\phi_g) \frac{1}{d_{\text{ref}}} e^{-j \frac{2\pi}{\lambda} d_{\text{ref}}},
\end{equation}
where $d_{\text{LoS}}$ and $d_{\text{ref}}$ are the signal propagation distances for Line-of-Sight (LoS) and ground reflection, respectively; $\lambda$ is the carrier wavelength. Here, the grazing angle $\phi_g$ is defined as the angle between the ground plane and the reflected ray path $\phi_g = \arctan\left( \frac{p_{z,i} + p_{z,j}}{d_{ij}} \right)$, where $p_{z,i}, p_{z,j}$ are the altitudes of the agents and $d_{ij}$ is the horizontal separation distance. This angle determines the reflection coefficient $\Gamma(\phi_g)$ and the phase lag of the secondary path.

\subsubsection{Spatial Smoothing Kernel}
To prevent numerical instability in gradient-based optimization due to fast-fading nulls, we formally define the spatial smoothing operator $\mathcal{S}_\epsilon$. Theoretically, for any position-dependent scalar field $f(\mathbf{p})$, this operator represents the convolution with a kernel $\phi_\epsilon$ over a support volume $\mathcal{B}_\epsilon$:
\begin{equation}
    \mathcal{S}_\epsilon [ f(\mathbf{p}_i) ] = \int_{\mathcal{B}_\epsilon} f(\mathbf{p}_i + \boldsymbol{\delta}) \phi_\epsilon(\boldsymbol{\delta}) d\boldsymbol{\delta},
    \label{eq:gain_integral}
\end{equation}
where $\boldsymbol{\delta} \in \mathbb{R}^3$ is a displacement vector and $\phi_\epsilon$ is a normalized bump function with compact support radius $\epsilon>0$. Applying this to the channel power yields the smoothed effective gain $\bar{P}_{ij} = \mathcal{S}_\epsilon [ |h_{ij}|^2 ]$, as utilized in the decoupling approximation \eqref{eq:decoupling}.

We approximate the integral in \eqref{eq:gain_integral} using a deterministic quadrature transform (sigma-point approximation) adapted from nonlinear filtering~\cite{julier1997new}. We employ a symmetric 7-point stencil aligned with the body axes:
\begin{equation}
    \mathcal{S}_\epsilon [ f(\mathbf{p}_i) ] \approx \frac{1}{M} \sum_{m=1}^M f(\mathbf{p}_i + \boldsymbol{\delta}_m),
    \label{eq:smoothing}
\end{equation}
with $M=7$. Here, $\boldsymbol{\delta}_m$ are the sigma points defined by $\boldsymbol{\delta}_m \in \{ \mathbf{0}, \pm \epsilon \mathbf{e}_x, \pm \epsilon \mathbf{e}_y, \pm \epsilon \mathbf{e}_z \}$. While this simplifies the kernel $\phi_\epsilon$ to a uniform distribution over the stencil, it effectively acts as a low-pass spatial filter, preserving the gradient regularity required by the optimizer.

\subsubsection{Directional Gain} 
The electronic beamsteering mechanism is fundamentally modeled by the Array Factor, whose functional form for a Uniform Linear Array (ULA) is mathematically equivalent to the Dirichlet kernel. To enable derivative-based control, we approximate the main lobe of the Dirichlet kernel using a differentiable Gaussian proxy~\cite{zhu2019beam}:
\begin{equation}
    G(\psi, \phi) = N_{ula} \exp\left( -\frac{(\psi - \phi)^2}{2\sigma^2} \right),
    \label{eq:gain}
\end{equation}
where $\psi$ represents the actual yaw angle (heading) of the transmitting agent, $\phi$ represents the ideal azimuth angle toward the receiving agent. The parameter $\sigma = \frac{\text{HPBW}}{2\sqrt{2 \ln 2}}$ matches the Half-Power Beamwidth of the physical array. This proxy envelops the main lobe, providing a smooth, unimodal gradient $\nabla G$ that guides the SQP solver toward optimal alignment.

\subsection{Control Objective}
We formulate the global control objective as a Finite Horizon Optimal Control Problem (FHOCP). Let $\mathbf{U}_k = \{\mathbf{u}_1(k), \dots, \mathbf{u}_N(k)\}$ denote the set of control inputs for all agents at time step $k$. The system seeks to find the optimal control sequence $\mathbf{U}^* = \{\mathbf{U}_0, \dots, \mathbf{U}_{N_c-1}\}$ over a prediction horizon $N_c$ that minimizes the following composite cost function:
\begin{equation}
    J(\mathbf{U}) = \sum_{k=0}^{N_c-1} \sum_{i=1}^N J_i\big(\mathbf{x}_i(k), \mathbf{u}_i(k), \mathbf{x}_{\mathcal{N}_i}(k)\big),
\end{equation}
where the term $\mathcal{N}_i$ represents the set of neighboring agents to the $i$-th agent, depending on the topology of the swarm. The local stage cost $J_i(\cdot)$ balances four competing objectives:
\begin{equation}
    J_i = J_{\text{track}, i} + J_{\text{reg}, i} + J_{\text{safe}, i} + J_{\text{comm}, i}
    \label{eq:local_cost_def}
\end{equation}

\subsubsection{Trajectory Tracking and Regularization}
Standard quadratic forms are employed to minimize deviation from the reference $\mathbf{p}_{\text{ref}, i}$ and penalize control effort~\cite{rawlings2017model}. The matrices $\mathbf{Q}_p$ (tracking fidelity) and $\mathbf{R}_{\text{diag}}$ (control effort) weight these competing objectives. The diagonal matrix $\mathbf{R}_{\text{diag}}= \blkdiag(\mathbf{R})$ is parameterized by the vector $\mathbf{R} = [R_{a_x}, R_{a_y}, R_{a_z}, R_\omega]^\top$. The cost is formulated as:
\begin{equation}
    J_{\text{track}, i} + J_{\text{reg}, i} = \|\mathbf{p}_i(k) - \mathbf{p}_{\text{ref}, i}(k)\|_{\mathbf{Q}_p}^2 + \|\mathbf{u}_i(k)\|_{\mathbf{R}_{\text{diag}}}^2.
    \label{eq:tracking_cost}
\end{equation}

\subsubsection{Safety Barrier Potential}
To enforce the collision avoidance constraint~\eqref{eq:safe_constraints} within the gradient-based optimization framework, we employ an interior penalty function method~\cite{nocedal2006numerical}:
\begin{equation}
    J_{\text{safe}, i} = w_{\text{safe}} \sum_{j \in \mathcal{N}_i} \frac{1}{\|\mathbf{p}_i(k) - \mathbf{p}_j(k)\|^2 - d_{\min}^2 + \mu}
    \label{eq:safety_cost}
\end{equation}
where $\mu > 0$ is a small relaxation parameter to ensure $C^2$ continuity when margin is temporarily violated.
\subsubsection{Communication Cost with Hybrid Beamsteering}
The capacity is utilized as the communication cost metric. To capture the capabilities of modern phased arrays, we employ a hybrid beamsteering model that combines mechanical yaw $\psi_i$ with electronic scanning $\phi_{ij}$ relative to the body frame. The local communication cost is:
\begin{equation}
    J_{\text{comm}, i} = - w_{\text{comm}} \sum_{j \in \mathcal{N}_i} C_{ij}
    \label{eq:comm_cost}
\end{equation}
The capacity $C_{ij}$ is calculated using an effective SNR that accounts for path loss, array gain, and aperture scan loss:
\begin{equation}
    C_{ij} = W \log_2 \left( 1 + \underbrace{\text{SNR}_{0} \cdot |h_{ij}(\mathbf{p}_i)|^2 \cdot G_{\text{hyb}}(\phi_{ij}, \psi_i)}_{\text{SNR}_{ij}(\mathbf{p}_i)} \right)
    \label{eq:capacity}
\end{equation}
where the hybrid gain function $G_{\text{hyb}}$ models the array's attempt to electronically compensate for the mechanical misalignment $\Delta \psi_{ij} = \phi_{ij} - \psi_i$. The optimal electronic steering angle $\phi_{ij}^*$ is constrained by the array's Field of View (FoV), $\Phi_{\max}$, as $\phi_{ij}^* = \min\left( \max\left( \Delta \psi_{ij}, -\Phi_{\max} \right), \Phi_{\max} \right)$.

The resulting gain is the product of the peak array factor $N_{ula}$, the scan loss $L_{\text{scan}}$ (due to projected aperture reduction), and the residual pointing error:
\begin{equation}
    G_{\text{hyb}} = N_{ula} \cdot \underbrace{(\cos \phi_{ij}^*)^{\kappa}}_{L_{\text{scan}}} \cdot \exp\left( -\frac{(\Delta \psi_{ij} - \phi_{ij}^*)^2}{2\sigma^2} \right)
    \label{eq:hybrid_gain}
\end{equation}
where $\Delta \psi_{ij}^{\text{res}} = |\Delta \psi_{ij} - \phi_{ij}^*|$ is defined as the angular difference between the mechanical misalignment and the electronic compensation; $\kappa \approx 1.3$ is the scan loss exponent.

\section{Proposed Solution}
\label{sec:solution}

We propose a decentralized MPC framework that addresses the inherent non-convexity of the joint trajectory-connectivity problem. The solution employs a BCD strategy, decomposing the global optimization into tractable local sub-problems solved via Sequential Quadratic Programming (SQP).

\subsection{Decentralized Problem Formulation}
The control scheme operates in a receding horizon fashion. At time $t$, agent $i$ solves a FHOCP to determine the optimal control sequence $\tilde{\mathbf{U}}_i$. To decouple the inter-agent constraints, we adopt a penalty method approach~\cite{xu2013block}, formulating the problem as:
\begin{align}
    \min_{\tilde{\mathbf{U}}_i} \quad & \sum_{k=0}^{N_c-1} \tilde{J}_i(\mathbf{x}_i(k), \mathbf{u}_i(k), \mathbf{x}_{\mathcal{N}_i}(k)) \label{eq:optimization_prob} \\
    \text{s.t.} \quad & \mathbf{x}_i(k+1) = \mathbf{A} \mathbf{x}_i(k) + \mathbf{B} \mathbf{u}_i(k), \\
    & \|\mathbf{a}_i(k)\|_\infty \leq a_{\max}, \quad |\omega_i(k)| \leq \omega_{\max}, \label{eq:init_constraint}
\end{align}
where $\tilde{J}_i$ is the regularized cost function described below. 

The overall cost function $J_i$ is non-convex due to the inclusion of the barrier term $J_{\text{safe}}$ and the communication term $J_{\text{comm}}$; while the directional nature of the safety barrier allows its non-convexity to be locally managed by the SQP solver, the multi-modal landscape generated by $J_{\text{comm}}$'s reliance on highly non-linear array gain and channel models prohibits SQP from guaranteeing convergence to the global capacity optimum.

\subsection{Regularized Cost Function Design}
\label{subsec:regularization}
Direct optimization of the instantaneous capacity is numerically unstable due to the deep nulls and high-frequency oscillations of the fast-fading channel. To address this, we target the spatially averaged capacity, defined as:
\begin{equation}
    \bar{C}_i = \mathcal{S}_\epsilon \left[ W \log_2(1 + \mathrm{SNR}_{ij}) \right].
    \label{eq:cap}
\end{equation}
Since the spatial smoothing operator $\mathcal{S}_\epsilon$ acts as a linear expectation (a weighted sum over the stencil), and the function $f(x) = \log_2(1+x)$ is concave, we apply Jensen's inequality.  This allows us to upper bound the expectation of the logarithm with the logarithm of the expectation:
\begin{equation}
    \underbrace{\mathcal{S}_\epsilon \left[ \log_2(1 + \mathrm{SNR}_{ij}) \right]}_{\bar{C}_i} 
    \le 
    \underbrace{\log_2 \left( 1 + \mathcal{S}_\epsilon \left[ \mathrm{SNR}_{ij} \right] \right)}_{\text{Surrogate Bound}}.
    \label{eq:jensen_bound}
\end{equation}
We adopt this smooth upper bound as our surrogate cost function, summing over all neighbors $j \in \mathcal{N}_i$:
\begin{equation}
    J_{\text{surr}, i} = \sum_{j \in \mathcal{N}_i} W \log_2\left( 1 + \mathcal{S}_\epsilon \left[ \mathrm{SNR}_{ij}(\mathbf{p}_i) \right] \right).
    \label{eq:surr_cost}
\end{equation}
To evaluate the smoothed term $\mathcal{S}_\epsilon[\mathrm{SNR}_{ij}]$, we exploit the spectral separation between the channel and the beamformer. The channel power $|h_{ij}|^2$ fluctuates at the wavelength scale $\lambda$ (small-scale fading), while the hybrid gain $G_{\mathrm{hyb}}$ varies over the much larger beamwidth scale $\sigma \gg \lambda$. This scale disparity renders the terms statistically independent over the smoothing window, allowing the spatial average of the product to be decoupled:
\begin{equation}
    \mathcal{S}_\epsilon \left[ \mathrm{SNR}_{ij}(\mathbf{p}_i) \right] 
    \approx \mathrm{SNR}_0 \cdot \underbrace{\mathcal{S}_\epsilon \left[ G_{\mathrm{hyb}} \right]}_{G_{\mathrm{hyb}}^\epsilon} \cdot \underbrace{\mathcal{S}_\epsilon \left[ |h_{ij}|^2 \right]}_{\bar{P}_{ij}}.
    \label{eq:decoupling}
\end{equation}
As proved in Section~\ref{sec:theory}, this surrogate cost possesses the Lipschitz continuity required for stable gradient-based optimization.

\subsection{SQP Solver Implementation and Complexity}
To solve the non-convex optimization problem~\eqref{eq:optimization_prob}, we employ a Sequential Quadratic Programming (SQP) scheme utilizing the \texttt{SLSQP} active-set algorithm~\cite{virtanen2020scipy}. At each major iteration $l$, the solver approximates the Lagrangian function $\mathcal{L}(\mathbf{U}, \boldsymbol{\lambda})$ with a quadratic model to generate a descent direction $\mathbf{d}_l$ via a bounded-variable least squares sub-problem.

Crucially, the gradient $\nabla J_i$ is computed utilizing the analytical expressions derived in Theorem~1, following the procedure outlined in Algorithm~\ref{alg:gradient_calc}. This ensures the solver receives precise sensitivity information regarding channel fading, allowing the Broyden–Fletcher–Goldfarb–Shanno (BFGS) algorithm~\cite{nocedal2006numerical} to maintain a positive definite Hessian approximation $\mathbf{H}_i^{(l)}$ without the computational cost of evaluating exact second-order derivatives. The control sequence is updated via $\tilde{\mathbf{U}}_i^{(l+1)} = \tilde{\mathbf{U}}_i^{(l)} + \alpha_l \mathbf{d}_l$, where the step size $\alpha_l$ is determined by a backtracking line search satisfying the Armijo condition.

\subsubsection*{Computational Analysis}
The complexity of the proposed framework is dominated by two stages: 1) Gradient Assembly: Evaluating the regularized gradient requires $M=7$ channel evaluations per neighbor for each step of the horizon $N_c$. The cost scales as $\mathcal{O}(N_c \cdot |\mathcal{N}_i| \cdot M)$, which is linear in the local neighborhood size but independent of the total swarm size $N$. 2) QP Solution: The active-set solver scales cubically with the number of decision variables $n_{\text{var}} = N_c \cdot \dim(\mathbf{u}_i)$, expressed as $\mathcal{O}(n_{\text{var}}^3)$.

\begin{figure}[!t]
\vspace{-0.2cm}
\begin{algorithm}[H] 
\caption{Regularized Gradient Computation (per Agent $i$)}
\label{alg:gradient_calc}
\begin{algorithmic}[1]
\REQUIRE State $\mathbf{x}_i$, Neighbor States $\mathbf{x}_j$, Smoothing $\epsilon$
\STATE \textbf{Output:} Regularized Gradient $\mathbf{g} = \nabla_{\mathbf{p}_i} J_{\text{surr},i}$
\FORALL{$j \in \mathcal{N}_i$}
    \STATE \textit{// 1. Generate Stencil Points (Eq. 5)}
    \STATE $\mathcal{D} \gets \{ \mathbf{p}_i, \mathbf{p}_i \pm \epsilon \mathbf{e}_x, \mathbf{p}_i \pm \epsilon \mathbf{e}_y, \mathbf{p}_i \pm \epsilon \mathbf{e}_z \}$
    \STATE \textit{// 2. Decoupled Smoothing (Eq. 20)}
    \STATE Compute $\bar{P}_{ij} \gets \frac{1}{M} \sum_{\boldsymbol{\delta} \in \mathcal{D}} |h_{ij}(\mathbf{p}_i + \boldsymbol{\delta})|^2$
    \STATE Compute $G_{\text{hyb}}^\epsilon \gets \text{SmoothBeam}(\mathbf{p}_i, \psi_i, \mathbf{p}_j; \sigma, \epsilon)$
    \STATE $\overline{\text{SNR}}_{ij} \gets \text{SNR}_0 \cdot \bar{P}_{ij} \cdot G_{\text{hyb}}^\epsilon$
    \STATE \textit{// 3. Chain Rule Accumulation (Thm. 1)}
    \STATE $w_{ij} \gets \frac{W}{\ln 2} (1 + \overline{\text{SNR}}_{ij})^{-1}$
    \STATE $\mathbf{g} \gets \mathbf{g} + w_{ij} \nabla_{\mathbf{p}_i} (\overline{\text{SNR}}_{ij})$
\ENDFOR
\end{algorithmic}
\end{algorithm}
\vspace{-0.5cm}
\end{figure}

\section{Theoretical Analysis}
\label{sec:theory}

\subsection{Regularity and Lipschitz Continuity}
\label{subsec:regularization_theory}

The convergence of the gradient-based SQP solver relies strictly on the smoothness of the objective function. The standard logarithmic capacity term $C_{ij} \propto \log(1 + \text{SNR}_{ij})$ is ill-conditioned because the gradient $\|\nabla C_{ij}\|$ becomes unbounded as $\text{SNR}_{ij} \to 0$ (fading nulls) or when the spatial derivative of the phase is discontinuous.

\begin{theorem}[Lipschitz Continuity of the Cost Gradient]
Let the spatial domain $\mathcal{X}$ be bounded. The gradient of the regularized surrogate cost $\nabla J_{\text{surr}, i}$ is Lipschitz continuous with respect to the position state $\mathbf{p}_i$. Specifically, there exists a constant $L_{\text{grad}} < \infty$ depending on the smoothing radius $\epsilon$ and beamwidth $\sigma$, such that:
\begin{equation}
    \| \nabla J_{\text{surr}, i}(\mathbf{x}) - \nabla J_{\text{surr}, i}(\mathbf{y}) \| \le L_{\text{grad}} \| \mathbf{x} - \mathbf{y} \|, \quad \forall \mathbf{x}, \mathbf{y} \in \mathcal{X}.
\end{equation}
\end{theorem}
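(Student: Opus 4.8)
The plan is to prove the slightly stronger statement that $J_{\text{surr},i}$ is twice continuously differentiable on the (bounded, and without loss of generality convex) domain $\mathcal{X}$ with a Hessian that is uniformly bounded in operator norm, and then to conclude via the mean-value inequality: if $\sup_{\mathbf{p}\in\mathcal{X}}\|\nabla^2 J_{\text{surr},i}(\mathbf{p})\|\le L_{\text{grad}}$, then writing $\nabla J_{\text{surr},i}(\mathbf{x})-\nabla J_{\text{surr},i}(\mathbf{y})=\int_0^1 \nabla^2 J_{\text{surr},i}\big(\mathbf{y}+t(\mathbf{x}-\mathbf{y})\big)\,(\mathbf{x}-\mathbf{y})\,dt$ immediately yields the claimed bound. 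Everything therefore reduces to bounding the Hessian of one summand $W\log_2(1+S_{ij})$, where $S_{ij}(\mathbf{p}_i)\triangleq\mathcal{S}_\epsilon[\mathrm{SNR}_{ij}]=\mathrm{SNR}_0\,\bar P_{ij}(\mathbf{p}_i)\,G_{\text{hyb}}^\epsilon(\mathbf{p}_i)$ by the decoupling \eqref{eq:decoupling}, and then summing the $|\mathcal{N}_i|$ neighbor contributions.

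First I would pin down the feasible region: on $\mathcal{X}$ the safety constraint \eqref{eq:safe_constraints} holds, and choosing the smoothing radius $\epsilon<d_{\min}$ guarantees that every stencil point $\mathbf{p}_i+\boldsymbol{\delta}_m$ still satisfies $\|\mathbf{p}_i+\boldsymbol{\delta}_m-\mathbf{p}_j\|\ge d_{\min}-\epsilon>0$ and stays in a bounded set. Hence $d_{\text{LoS}},d_{\text{ref}}$ are bounded above and below by positive constants on the enlarged stencil domain, so $|h_{ij}(\cdot)|^2$ is $C^\infty$ there with first and second derivatives bounded by constants depending only on $d_{\min},\lambda$ and the domain diameter; these bounds pass to the finite average $\bar P_{ij}=\tfrac1M\sum_m|h_{ij}(\mathbf{p}_i+\boldsymbol{\delta}_m)|^2$. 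For the beam factor, $G_{\text{hyb}}$ is bounded by $N_{ula}$ by construction in \eqref{eq:hybrid_gain} (the scan loss and residual-pointing factors lie in $[0,1]$) and, away from the two FoV saturation angles $\Delta\psi_{ij}=\pm\Phi_{\max}$, it is a smooth function of the azimuth $\phi_{ij}(\mathbf{p}_i,\mathbf{p}_j)$, which is itself $C^\infty$ in $\mathbf{p}_i$ wherever $\mathbf{p}_i\neq\mathbf{p}_j$. The non-differentiability at the clipping boundary is exactly what the averaging removes: writing $\mathcal{S}_\epsilon$ in the convolution form \eqref{eq:gain_integral} with a fixed $C_c^\infty$ bump $\phi_1$ and scaling $\phi_\epsilon(\boldsymbol{\delta})=\epsilon^{-3}\phi_1(\boldsymbol{\delta}/\epsilon)$, and moving the position derivatives onto the kernel, gives $\nabla^2_{\mathbf{p}_i}G_{\text{hyb}}^\epsilon(\mathbf{p}_i)=\int G_{\text{hyb}}(\mathbf{p}_i+\boldsymbol{\delta})\,(\nabla^2\phi_\epsilon)(\boldsymbol{\delta})\,d\boldsymbol{\delta}$, hence $\|\nabla^2 G_{\text{hyb}}^\epsilon\|\le N_{ula}\|\nabla^2\phi_1\|_{L^1}\,\epsilon^{-2}$ and $\|\nabla G_{\text{hyb}}^\epsilon\|\le N_{ula}\|\nabla\phi_1\|_{L^1}\,\epsilon^{-1}$ uniformly on $\mathcal{X}$ (with $\sigma$ entering the sharper, $\epsilon$-independent estimate valid away from the kinks). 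The $7$-point stencil \eqref{eq:smoothing} is the consistent quadrature realization of this convolution, so the same bounds hold up to the quadrature constant.

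Next I would assemble the pieces. The product $S_{ij}=\mathrm{SNR}_0\,\bar P_{ij}\,G_{\text{hyb}}^\epsilon$ is $C^2$ on $\mathcal{X}$ with $\|\nabla S_{ij}\|\le B_1(\epsilon,\sigma)$ and $\|\nabla^2 S_{ij}\|\le B_2(\epsilon,\sigma)$ by the product rule applied to two bounded, bounded-derivative factors, the dominant scaling being $B_2=\mathcal{O}(\epsilon^{-2})$ through the $G_{\text{hyb}}^\epsilon$ term. Since $S_{ij}\ge0$, the outer map $s\mapsto W\log_2(1+s)$ is $C^\infty$ on $[0,\infty)$ with $0<\tfrac{d}{ds}W\log_2(1+s)\le\tfrac{W}{\ln2}$ and $\big|\tfrac{d^2}{ds^2}W\log_2(1+s)\big|\le\tfrac{W}{\ln2}$ for all $s\ge0$, so the chain and product rules give
\[
\nabla^2_{\mathbf{p}_i}\!\big[W\log_2(1+S_{ij})\big]
=\frac{W}{\ln2}\!\left[\frac{\nabla^2 S_{ij}}{1+S_{ij}}-\frac{\nabla S_{ij}\,\nabla S_{ij}^{\!\top}}{(1+S_{ij})^2}\right],
\]
and because $1+S_{ij}\ge1$ both terms are bounded, so $\big\|\nabla^2[W\log_2(1+S_{ij})]\big\|\le\tfrac{W}{\ln2}\big(B_2+B_1^2\big)$. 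Summing over $j\in\mathcal{N}_i$ yields $\sup_{\mathcal{X}}\|\nabla^2 J_{\text{surr},i}\|\le L_{\text{grad}}$ with $L_{\text{grad}}=\tfrac{W|\mathcal{N}_i|}{\ln2}\big(B_2(\epsilon,\sigma)+B_1(\epsilon,\sigma)^2\big)<\infty$, whose dominant dependence is $L_{\text{grad}}=\mathcal{O}\!\big(|\mathcal{N}_i|\,W\,\mathrm{SNR}_0\,\epsilon^{-2}\big)$; the mean-value inequality then delivers the Lipschitz bound on $\mathcal{X}$.

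The main obstacle is precisely the regularity of the beam-gain term: $G_{\text{hyb}}$ is only Lipschitz — not $C^1$ — in $\mathbf{p}_i$ because of the FoV saturation $\phi_{ij}^*=\mathrm{clip}(\Delta\psi_{ij},\pm\Phi_{\max})$ and the scan-loss factor $(\cos\phi_{ij}^*)^\kappa$, so it cannot be differentiated directly; the resolution is that $\mathcal{S}_\epsilon$ is convolution with a smooth compactly supported kernel, which transfers all required derivatives onto the kernel at the price of negative powers of $\epsilon$ — and it is exactly this $\epsilon$-dependence of $L_{\text{grad}}$ that the contraction condition \eqref{eq:contraction_condition} must subsequently dominate. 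A secondary point requiring care is the standing assumption $\epsilon<d_{\min}$ and $\mathcal{X}$ collision-free, which keeps every distance appearing in $h_{ij}$ bounded away from zero; note that a strict lower bound $\bar P_{ij}\ge\underline P>0$ (which the smoothing also provides once $\epsilon$ exceeds the $\mathcal{O}(\lambda)$ null width) is needed only if one works with $\log(\mathrm{SNR})$ rather than the $\log(1+\mathrm{SNR})$ form used here, for which $1+S_{ij}\ge1$ already suffices.
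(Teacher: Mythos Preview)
Your proposal is correct and follows essentially the same route as the paper: bound the Hessian of the surrogate via the decoupled product form \eqref{eq:decoupling}, use mollification to cure the FoV-clipping kinks in $G_{\text{hyb}}$ (transferring derivatives onto the kernel at cost $\epsilon^{-2}$), assemble via the product and chain rules, and invoke the mean-value inequality. If anything you are more careful than the paper --- you make the $\epsilon<d_{\min}$ stencil-feasibility assumption explicit, write the chain-rule Hessian of $\log_2(1+S_{ij})$ out in full, and flag that $1+S_{ij}\ge1$ is exactly what prevents the logarithmic blow-up; the only cosmetic discrepancy is that the paper records the final constant as $L_{\text{grad}}=K(\epsilon^{-2}+\sigma^{-2})$, retaining the beamwidth contribution you mention but absorb into $B_2(\epsilon,\sigma)$.
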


\begin{proof}
The gradient of the cost with respect to $\mathbf{p}_i$ is derived from \eqref{eq:surr_cost} using the chain rule:
\begin{equation}
    \nabla J_{\text{surr}, i} = \frac{W}{\ln 2} \sum_{j \in \mathcal{N}_i} \underbrace{\left[\frac{1}{1 + \mathcal{S}_\epsilon[\mathrm{SNR}_{ij}]}\right]}_{\le 1} \nabla \mathcal{S}_\epsilon[\mathrm{SNR}_{ij}].
\end{equation}
The Lipschitz continuity of $\nabla J_{\text{surr}, i}$ follows from a uniform bound on its Hessian $\|\nabla^2 J_{\text{surr}, i}\|$. We utilize the decoupled approximation form derived in \eqref{eq:decoupling}, $\mathcal{S}_\epsilon[\mathrm{SNR}_{ij}] \approx \mathrm{SNR}_0\,\bar{P}_{ij}\, G_{\mathrm{hyb}}^\epsilon$. By the product rule, the Hessian of the effective SNR is:
\[
    \nabla^2 \mathcal{S}_\epsilon[\mathrm{SNR}] \propto \bar{P}_{ij} \nabla^2 G_{\mathrm{hyb}}^\epsilon + G_{\mathrm{hyb}}^\epsilon \nabla^2 \bar{P}_{ij} + 2 \nabla G_{\mathrm{hyb}}^\epsilon (\nabla \bar{P}_{ij})^\top.
\]
We analyze the curvature of the two decoupled components:

1) Hybrid Gain ($G_{\mathrm{hyb}}^\epsilon$): The raw gain $G_{\mathrm{hyb}}$ contains derivative discontinuities at the FoV boundary. However, the application of the operator $\mathcal{S}_\epsilon$ mollifies these jumps, ensuring $G_{\mathrm{hyb}}^\epsilon \in C^\infty$. The curvature is bounded by the sharper of the physical beamwidth or the smoothing kernel: $\|\nabla^2 G_{\mathrm{hyb}}^\epsilon\| = O(\epsilon^{-2} + \sigma^{-2})$.

2) Channel Power ($\bar{P}_{ij}$): The raw channel power $|h|^2$ exhibits high-frequency oscillations ($\lambda$-scale). The operator $\mathcal{S}_\epsilon$ acts as a spatial low-pass filter \eqref{eq:smoothing}, bounding the Hessian of the smoothed field by the stencil width:
\[
    \|\nabla^2 \bar{P}_{ij}\| \le C_P \epsilon^{-2}, \qquad \|\nabla \bar{P}_{ij}\| \le C_P' \epsilon^{-1}.
\]
Since $G_{\mathrm{hyb}}^\epsilon$ and $\bar{P}_{ij}$ are bounded on $\mathcal{X}$, the Hessian of the effective SNR is bounded by $\|\nabla^2 \mathcal{S}_\epsilon[\mathrm{SNR}]\| \le K_{\text{SNR}}(\epsilon^{-2} + \sigma^{-2})$. Finally, applying the chain rule to $\nabla^2 J_{\text{surr}, i}$ shows that the total curvature is a sum of bounded terms:
\[
    \|\nabla^2 J_{\text{surr}, i}\| \le K \left( \epsilon^{-2} + \sigma^{-2} \right).
\]
Consequently, the gradient is Lipschitz continuous with constant $L_{\text{grad}} = K(\epsilon^{-2} + \sigma^{-2})$.
\end{proof}

\subsection{Approximation Accuracy}
While smoothing restores regularity, it introduces a bias. We establish that this approximation error is quadratically bounded, confirming the fidelity of the surrogate model.

\begin{proposition}[Quadratic Approximation Error]
The discretization error introduced by the surrogate cost $J_{\text{surr}, i}$ relative to the continuous smoothing integral is bounded by $\mathcal{O}(\epsilon^2)$. Specifically,
\begin{equation}
    |J_{\text{cont}} - J_{\text{surr}, i}| \leq C \|\nabla^2 \mathrm{SNR}\| \epsilon^2.
\end{equation}
This quadratic convergence arises from the symmetry of the quadrature stencil ($\sum \boldsymbol{\delta}_m = \mathbf{0}$), which cancels the first-order linear terms in the Taylor expansion of the SNR field, leaving the Hessian-dependent quadratic term as the dominant residual.
\end{proposition}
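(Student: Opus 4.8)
The plan is to reduce the claimed cost discrepancy to a per-link quadrature-error estimate for the smoothing operator acting on the raw SNR field, and then extract the $\mathcal{O}(\epsilon^2)$ rate from a second-order Taylor argument that exploits the symmetry of the $7$-point stencil. First I would note that the only difference between $J_{\text{cont}}$ and $J_{\text{surr},i}$ is whether the smoothing of $\mathrm{SNR}_{ij}$ is carried out by the exact convolution $I[f]\triangleq\int_{\mathcal{B}_\epsilon} f(\mathbf{p}_i+\boldsymbol{\delta})\,\phi_\epsilon(\boldsymbol{\delta})\,d\boldsymbol{\delta}$ of \eqref{eq:gain_integral} or by the quadrature $Q[f]\triangleq\frac{1}{M}\sum_{m=1}^{M} f(\mathbf{p}_i+\boldsymbol{\delta}_m)$ of \eqref{eq:smoothing}. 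Since $x\mapsto\log_2(1+x)$ is $\tfrac{1}{\ln 2}$-Lipschitz on $[0,\infty)$ and both smoothed values are nonnegative, summing over $j\in\mathcal{N}_i$ yields $|J_{\text{cont}}-J_{\text{surr},i}|\le \tfrac{W}{\ln 2}\sum_{j\in\mathcal{N}_i}\bigl|\,I[\mathrm{SNR}_{ij}]-Q[\mathrm{SNR}_{ij}]\,\bigr|$, so it suffices to bound each term by $\mathcal{O}(\epsilon^2)$.

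Next I would Taylor-expand the SNR field about $\mathbf{p}_i$: $\mathrm{SNR}_{ij}(\mathbf{p}_i+\boldsymbol{\delta})=\mathrm{SNR}_{ij}(\mathbf{p}_i)+\nabla\mathrm{SNR}_{ij}(\mathbf{p}_i)^\top\boldsymbol{\delta}+r_{ij}(\boldsymbol{\delta})$ with remainder $|r_{ij}(\boldsymbol{\delta})|\le \tfrac12\|\nabla^2\mathrm{SNR}_{ij}\|\,\|\boldsymbol{\delta}\|^2$, where $\|\nabla^2\mathrm{SNR}_{ij}\|$ is the supremum of the Hessian operator norm over $\mathcal{B}_\epsilon$ (equivalently the Lipschitz modulus of $\nabla\mathrm{SNR}_{ij}$), finite on the bounded domain $\mathcal{X}$ because the safety constraint \eqref{eq:safe_constraints} keeps $d_{ij}$ away from the channel singularity. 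Both $I$ and $Q$ are linear functionals that reproduce constants ($I[1]=Q[1]=1$) and annihilate linear functions of $\boldsymbol{\delta}$: for the stencil because $\sum_m\boldsymbol{\delta}_m=\mathbf{0}$, and for the kernel because $\phi_\epsilon$ is radially symmetric so $\int\boldsymbol{\delta}\,\phi_\epsilon(\boldsymbol{\delta})\,d\boldsymbol{\delta}=\mathbf{0}$. Hence the zeroth- and first-order Taylor terms cancel identically, leaving $I[\mathrm{SNR}_{ij}]-Q[\mathrm{SNR}_{ij}]=I[r_{ij}]-Q[r_{ij}]$.

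It then remains to bound the two remainder functionals. For the quadrature, $\|\boldsymbol{\delta}_m\|\in\{0,\epsilon\}$ gives $|Q[r_{ij}]|\le \tfrac12\|\nabla^2\mathrm{SNR}_{ij}\|\,\tfrac{M-1}{M}\epsilon^2$; for the exact integral, $\operatorname{supp}\phi_\epsilon\subseteq\mathcal{B}_\epsilon$ gives $|I[r_{ij}]|\le \tfrac12\|\nabla^2\mathrm{SNR}_{ij}\|\,\epsilon^2$. Combining with the reduction step yields $|J_{\text{cont}}-J_{\text{surr},i}|\le \tfrac{W}{\ln 2}|\mathcal{N}_i|\,\tfrac12\bigl(1+\tfrac{M-1}{M}\bigr)\max_{j}\|\nabla^2\mathrm{SNR}_{ij}\|\,\epsilon^2 =: C\,\|\nabla^2\mathrm{SNR}\|\,\epsilon^2$, the claimed $\mathcal{O}(\epsilon^2)$ bound, with $C$ absorbing $W$, the neighbourhood size $|\mathcal{N}_i|$, and the fixed $7$-point geometry, and $\|\nabla^2\mathrm{SNR}\|$ understood as the worst-case link curvature over $\mathcal{X}$.

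I expect the main obstacle to be the regularity bookkeeping rather than the estimate itself. The raw SNR field is only $C^{1,1}$ in $\mathbf{p}_i$: the $\min/\max$ FoV clipping in $\phi_{ij}^*$ induces a gradient kink, and although the two-ray term $|h_{ij}|^2$ is $C^\infty$ it has $\lambda$-scale curvature, so the second-order step must be phrased through a Lipschitz-gradient (descent-lemma) bound with $\|\nabla^2\mathrm{SNR}\|$ read as that Lipschitz modulus rather than a pointwise Hessian norm. One must also check that the first-moment cancellation holds \emph{simultaneously} for the discrete stencil and for whichever symmetric kernel $\phi_\epsilon$ the continuous reference uses; this is immediate for centrally symmetric $\boldsymbol{\delta}_m$ and even $\phi_\epsilon$, but is the hinge of the $\mathcal{O}(\epsilon)\to\mathcal{O}(\epsilon^2)$ improvement and deserves to be stated explicitly. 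Finally, it is worth a sentence to stress that this is a pure \emph{discretization} error (quadrature versus exact convolution at the same $\epsilon$) and is conceptually distinct from the smoothing \emph{bias} of $\bar{C}_i$ relative to the instantaneous capacity, which is the gap controlled by Jensen's inequality in \eqref{eq:jensen_bound}.
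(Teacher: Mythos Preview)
Your proposal is correct and follows exactly the mechanism the paper invokes: a Taylor expansion of the SNR field about $\mathbf{p}_i$, with the symmetric stencil (and even kernel) annihilating the first-order term so that only the Hessian-weighted quadratic residual survives. The paper itself offers only the one-sentence sketch embedded in the proposition statement; your Lipschitz reduction from $J$ to the per-link SNR error, the explicit remainder bounds for both $I[\cdot]$ and $Q[\cdot]$, and the regularity caveats (reading $\|\nabla^2\mathrm{SNR}\|$ as a Lipschitz modulus to accommodate the FoV clipping) are all sound elaborations that the paper omits.
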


\subsection{Stability and Convergence Condition}
Having established the gradient properties in Theorem 1 and Proposition 1, we derive the sufficiency condition for the convergence of the distributed BCD scheme. For the sequential fixed-point iteration to be a contraction, the global Hessian matrix $\mathcal{H}$ must be strictly diagonally dominant.

The diagonal blocks $\mathcal{H}_{ii}$, representing agent self-convexity, are dominated by the tracking cost and bounded below by $\mathcal{H}_{ii} \succeq \lambda_{\min}(\mathbf{Q}_{\text{pos}}) \mathbf{I}$. The off-diagonal blocks $\mathcal{H}_{ij}$ represent the physical coupling; as proven in Theorem 1, these are bounded by the communication gradient's Lipschitz constant $\|\mathcal{H}_{ij}\| \le w_{\text{comm}} L_{\text{grad}}$.

Applying the block-matrix Gershgorin Circle Theorem~\cite{bertsekas1989parallel}, the dynamics converge linearly if the diagonal strength exceeds the aggregate coupling. This yields the stability criterion:
\begin{equation}
    \frac{w_{\text{comm}} \, L_{\text{grad}}(\epsilon, \sigma)}{\lambda_{\min}(Q_{\text{pos}})} < 1.
    \label{eq:contraction_condition}
\end{equation}
This inequality synthesizes the theoretical framework: to ensure stability for a given $w_{\text{comm}}$, one must select a smoothing radius $\epsilon$ large enough to reduce $L_{\text{grad}}$ below the threshold defined by the position controller, while accepting the $\mathcal{O}(\epsilon^2)$ approximation error.

\section{Simulation Results}
\label{sec:results}

\begin{figure*}[t]
    \centering
    \includegraphics[clip, trim=2cm 0cm 1.2cm 1cm, scale=0.6]{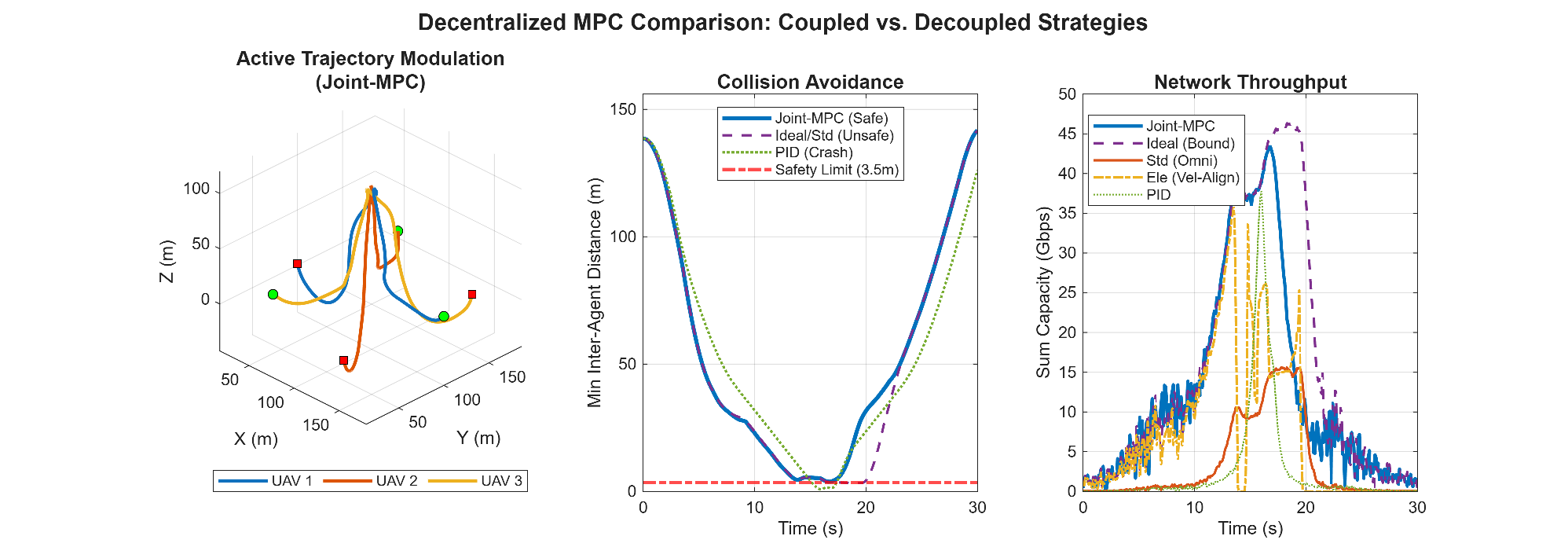}
    \vspace{-2em}
    \caption{(Left) 3D Trajectories of three UAVs performing an antipodal exchange. (Center) Evolution of minimum inter-agent distance showing strict adherence to the safety limit (dashed red line). (Right) Instantaneous Network Capacity. Note the ``notch'' in the proposed \textsf{Joint-MPC} method (blue) at $t=30$s, caused by angular velocity saturation during the close-range fly-by.}
    \label{fig:lccs_results}
    \vspace{-1em}
\end{figure*}

We evaluate the proposed framework via Monte Carlo simulations ($N_{\text{real}}=50$) in a high-fidelity 3D environment. The scenario involves a swarm of $N=3$ agents negotiating a high-speed antipodal crossing followed by a stiff formation maneuver generated via Akima splines~\cite{akima1970new}.

\subsection{System Configuration}
\label{subsec:setup}

\subsubsection{Dynamics and Control}
The swarm comprises rotary-wing UAVs modeled as double integrators ($T_s = 0.1$s) subject to actuation limits ($a_{\max} = 4.0$ m/s$^2$, $\omega_{\max} = 1.5$ rad/s). The MPC solver utilizes the \texttt{SLSQP} algorithm with a prediction horizon of $N_c = 15$ steps ($1.5$s). Cost function weights are tuned to $\mathbf{Q}_p = 2.0 \mathbf{I}$, $\mathbf{R} = [0.1, 0.1, 0.1, 0.001]^\top$, $w_{\text{comm}} = 1.0$, and $w_{\text{safe}} = 500.0$.

\subsubsection{Physical Layer}
We simulate a V-band link ($f_c = 60$ GHz) with a bandwidth of $W=2.16$ GHz, adhering to IEEE 802.11ad/ay standards~\cite{ieee80211ad}. Each agent carries a 16-element ULA (Peak Gain $\approx 12$ dBi, HPBW $6.4^\circ$). The propagation model includes geometric path loss and a ground reflection ($\Gamma=-1$).

\subsubsection{Network Topology}
For the directional UAV networks, we enforce a logical ring topology ($i \to (i+1) \pmod N$), where links require precise angular alignment of the antenna's main lobe with a specific target.

\subsection{Comparative Baselines and Metrics}
We benchmark the proposed \textsf{Joint-MPC} (joint optimization of trajectory and mechanical yaw) against four comparative strategies:
\begin{enumerate}
    \item \textsf{Ideal-MPC} (Upper Bound): Assumes the optimal kinematic path with perfect, instantaneous beam alignment ($\psi_{i} \equiv \psi_{\text{LoS}}$).
    \item \textsf{Ele-MPC} (Electronic Steering): Assumes standard velocity-aligned flight ($\psi_i = \angle \mathbf{v}_i$), leaving all misalignment compensation (within $\pm 60^\circ$ FoV) to electronic beam steering.
    \item \textsf{Std-MPC} (Omni): Standard kinematic MPC evaluated using omnidirectional antennas ($N_{ula}=1$).
    \item \textsf{PID} (Omni): A reactive potential-field controller evaluated using omnidirectional antennas.
\end{enumerate}

Performance is quantified using the following metrics averaged over time steps $T$ and agents $N$:
\begin{itemize}
    \item Global Min Distance ($\Delta_{\min}$): The minimum distance recorded between any two agents ($i \neq j$) across the entire duration ($T$). We report the Average, Variance, and Worst-Case (Min) values.
    \item Average Capacity ($\bar{C}$): The mean achievable communication rate across all active links $C_{ij}$ and time steps: $\bar{C} = \frac{1}{N T} \sum_{t, i} \sum_{j \in \mathcal{N}_i} C_{ij}(t)$.
    \item Outage Probability ($P_{\text{out}}$): The probability that the achievable link capacity $C_{ij}(t)$ drops below a critical threshold, $\mathbb{P}(C_{ij}(t) < 1 \text{ Gbps})$.
    \item Average Misalignment ($\bar{\epsilon}_\psi$): The mean absolute angular difference between the agent's actual yaw $\psi_i(t)$ and the required LoS yaw $\psi_{\text{LoS}}(t)$ over time.
    \item Control Effort ($\mathcal{E}_u$): The mean squared norm of the control input vector $\|\mathbf{u}_i(t)\|^2$ used by the agents: $\mathcal{E}_u = \frac{1}{N T} \sum_{t, i} \|\mathbf{u}_i(t)\|^2$.
\end{itemize}

\subsection{Performance Analysis}
Table \ref{tab:results} summarizes the performance metrics averaged over 50 realizations.

\subsubsection{The Failure of Velocity Alignment}
A critical finding is the failure of the \textsf{Ele-MPC} baseline, which suffers a 24\% outage probability and an average misalignment of $85.06^\circ$. While electronic steering provides some connectivity (2.16 Gbps), it is significantly less reliable than the coupled approach. This occurs because the antipodal crossing geometry requires agents to fly past each other (velocity vector forward) while communicating with a neighbor to their side (LoS vector orthogonal). Since the array is fixed to the body, the target drifts outside the electronic FoV ($\pm 60^\circ$), breaking the link. This result proves that mechanical steering (coupling yaw to position) is mandatory to keep neighbors within the electronic scan range.

\subsubsection{Safety-Capacity Trade-off}
The \textsf{Ideal-MPC} baseline achieves the highest nominal capacity (4.86 Gbps) but consistently violates the safety shell ($d_{\min}=2.54$ m $< 3.5$ m). Even the average minimum distance for the baselines (3.40 m) falls below the safety threshold. By flying in the unsafe regime, these methods artificially boost the SNR. 

In contrast, \textsf{Joint-MPC} strictly enforces safety, maintaining an average minimum distance of 4.13 m. Even in the worst-case realization, it maintains 3.03 m, far superior to the catastrophic crashes of \textsf{PID} (0.44 m) and the unsafe proximity of the standard MPCs (2.54 m). Despite this increased separation, \textsf{Joint-MPC} achieves a valid capacity of 4.00 Gbps, more than $4\times$ the performance of the blind \textsf{Std-MPC} baseline (0.98 Gbps).

\subsubsection{Alignment Precision}
The proposed framework demonstrates exceptional tracking capabilities, achieving an average misalignment of only $2.17^\circ$. This near-perfect alignment (99.07\% efficiency relative to the theoretical maximum) is achieved while simultaneously navigating complex collision avoidance maneuvers, confirming the effectiveness of the coupled optimization strategy.

\begin{table}[t]
\caption{Comparative Performance Metrics (Averaged over 50 Runs)}
\label{tab:results}
\centering
\begin{tabular}{l|c|c|c|c|c}
\toprule
\textbf{Metric} & \textbf{Joint} & \textbf{Ideal} & \textbf{Ele} & \textbf{Std} & \textbf{PID} \\
\midrule
Global Min Dist (Avg) & \textbf{4.13} & 3.40 & 3.40 & 3.40 & 0.82 \\
Global Min Dist (Var) & 0.25 & 0.14 & 0.14 & 0.14 & 0.03 \\
Global Min Dist (Min) & \textbf{3.03} & 2.54 & 2.54 & 2.54 & 0.44 \\
Avg Capacity (Gbps) & \textbf{4.00} & 4.86 & 2.16 & 0.98 & 0.79 \\
Outage Prob (\%) & 0.67 & 3.37 & 24.02 & 19.61 & 25.92 \\
Avg Misalign ($^\circ$) & 2.17 & \textbf{0.00} & 85.06 & N/A & N/A \\
Avg Effort (J) & 25.99 & 23.88 & 23.88 & 23.88 & 15.21 \\
Avg Solver Time (ms) & 84 & N/A & N/A & 49 & N/A \\
\bottomrule
\end{tabular}
\vspace{-1em}
\end{table}

\subsection{Parameter Sensitivity Analysis and Computational Feasibility}
To evaluate the impact of the smoothing radius $\epsilon$ on solver stability, we performed a sweep over $\epsilon \in [1, 20]$ cm. As detailed in Table \ref{tab:sensitivity}, the formulation demonstrates remarkable robustness. The solver success rate remains above 97\% across the range, indicating that the discrete stencil effectively regularizes the gradient even at small scales. The value $\epsilon=5$ cm balances geometric fidelity with gradient smoothness.

To assess real-time applicability, we profiled the solver on a standard consumer CPU. The \textsf{Joint-MPC} algorithm, which includes the computationally intensive 7-point spatial smoothing kernel, recorded an average execution time of $\tau_{\text{sol}} \approx 84$ ms per time step. This falls within the control sampling interval $T_s = 100$ ms, validating the feasibility of the approach even in a prototyped Python environment. Deployment on embedded hardware with C++ optimization would further reduce this latency, providing ample margin for higher-frequency control loops.

\begin{table}[t]
\caption{Sensitivity to Smoothing Radius $\epsilon$ (Averaged over 10 Runs)}
\label{tab:sensitivity}
\centering
\begin{tabular}{c|c|c|c}
\toprule
\textbf{Epsilon ($\epsilon$)} & \textbf{Success Rate} & \textbf{Avg Capacity} & \textbf{Avg Misalign} \\
(cm) & (\%) & (Gbps) & (deg) \\
\midrule
1.0 & 98.5 & 2.11 & 5.76 \\
3.0 & 98.1 & 2.03 & 6.84 \\
\textbf{5.0} & \textbf{97.7} & \textbf{2.18} & \textbf{6.16} \\
8.0 & 98.3 & 2.00 & 8.14 \\
15.0 & 98.2 & 1.94 & 8.05 \\
20.0 & 97.5 & 2.14 & 6.03 \\
\bottomrule
\end{tabular}
\vspace{-1.5em}
\end{table}

\section{Conclusion}
This letter presented a decentralized MPC framework that rigorously addresses the coupling between UAV mobility and directional networking. By introducing a spatially smoothed surrogate objective, we not only resolved the numerical instability caused by fast-fading nulls but also derived a closed-form sufficient condition for the linear convergence of the distributed dynamics. This theoretical result explicitly links the physical channel smoothing radius to the control system's stability margins. Numerical validation confirms that while standard velocity-aligned strategies suffer catastrophic link outages during dynamic maneuvers, the proposed joint control scheme maintains robust connectivity, validating the necessity of active mechanical beam steering for resilient aerial networks.

\bibliographystyle{IEEEtran}
\bibliography{literature}

@article{swindlehurst2014millimeter,
  title={Millimeter-wave massive {MIMO}: The next wireless revolution?},
  author={Swindlehurst, A. L. and others},
  journal={IEEE Commun. Mag.},
  volume={52},
  number={9},
  pages={56--62},
  year={2014}
}

@article{xiao2022survey,
  title={A survey on millimeter-wave beamforming enabled {UAV} communications and networking},
  author={Xiao, Z. and others},
  journal={IEEE Commun. Surv. Tutor.},
  volume={24},
  number={1},
  pages={557--614},
  year={2022}
}

@article{javaid2023communication,
  title={Communication and Control in Collaborative {UAVs}: Recent Advances and Future Trends},
  author={Javaid, S. and others},
  journal={IEEE Open J. Veh. Technol.},
  volume={4},
  pages={1--22},
  year={2023}
}

@article{li2019joint,
  title={Joint trajectory and beamforming design for millimeter-wave communication-enabled {UAVs}},
  author={Li, B. and Fei, Z. and Zhang, Y.},
  journal={IEEE Trans. Wireless Commun.},
  volume={18},
  number={2},
  pages={663--674},
  year={2019}
}

@article{gu2021distributed,
  title={Distributed formation control of {UAVs} with obstacle avoidance based on {MPC}},
  author={Gu, H. and Cheng, N. and Hu, X.},
  journal={Aerosp. Sci. Technol.},
  volume={116},
  pages={106864},
  year={2021}
}

@book{nocedal2006numerical,
  title={Numerical Optimization},
  author={Nocedal, J. and Wright, S. J.},
  year={2006},
  publisher={Springer},
  edition={2nd}
}

@book{rappaport2014millimeter,
  title={Millimeter Wave Wireless Communications},
  author={Rappaport, T. S. and others},
  year={2014},
  publisher={Pearson}
}

@article{virtanen2020scipy,
  title={{SciPy} 1.0: Fundamental Algorithms for Scientific Computing in Python},
  author={Virtanen, P. and others},
  journal={Nat. Methods},
  volume={17},
  pages={261--272},
  year={2020}
}

@inproceedings{zhu2019beam,
  title={Beam alignment for millimeter-wave {UAV} networks using reinforcement learning},
  author={Zhu, J. and others},
  booktitle={Proc. IEEE GLOBECOM Wkshps},
  pages={1--6},
  year={2019}
}

@article{qian2021formation,
  title={Formation-Control Stability and Communication Capacity of Multiagent Systems: A Joint Analysis},
  author={Qian, R. and others},
  journal={IEEE Trans. Control Netw. Syst.},
  volume={8},
  number={2},
  pages={923--927},
  year={2021}
}

@article{zhou2021joint,
  title={Joint Mobility, Communication and Computation Optimization for {UAVs} in Air-Ground Cooperative Networks},
  author={Zhou, J. and others},
  journal={IEEE Trans. Veh. Technol.},
  volume={70},
  number={3},
  pages={2493--2508},
  year={2021}
}

@standard{ieee80211ad,
  title={{IEEE} 802.11ad-2012 Standard},
  organization={IEEE},
  year={2012}
}

@article{akima1970new,
  title={A New Method of Interpolation and Smooth Curve Fitting Based on Local Procedures},
  author={Akima, H.},
  journal={J. ACM},
  volume={17},
  number={4},
  pages={589--602},
  year={1970}
}

@book{rawlings2017model,
  title={Model Predictive Control: Theory, Computation, and Design},
  author={Rawlings, J. B. and Mayne, D. Q. and Diehl, M. M.},
  year={2017},
  publisher={Nob Hill},
  edition={2nd}
}

@inproceedings{julier1997new,
  title={A new extension of the Kalman filter to nonlinear systems},
  author={Julier, S. J. and Uhlmann, J. K.},
  booktitle={Proc. SPIE},
  volume={3068},
  pages={182--193},
  year={1997}
}

@inproceedings{xu2013block,
  title={Block coordinate descent methods for multi-agent optimization},
  author={Xu, W. and Yin, W.},
  booktitle={Proc. 52nd IEEE CDC},
  pages={4069--4074},
  year={2013}
}

@book{franklin1998digital,
  title={Digital Control of Dynamic Systems},
  author={Franklin, G. F. and Powell, J. D. and Workman, M. L.},
  year={1998},
  edition={3rd},
  publisher={Addison-Wesley}
}

@article{zhang2019research,
  title={Research on formation control of {UAV} swarm systems based on distributed {MPC}},
  author={Zhang, X. and Li, S.},
  journal={IEEE Access},
  volume={7},
  pages={124374--124383},
  year={2019}
}

@book{bertsekas1989parallel,
  title={Parallel and Distributed Computation: Numerical Methods},
  author={Bertsekas, D. P. and Tsitsiklis, J. N.},
  year={1989},
  publisher={Prentice Hall}
}

\end{document}